\documentstyle[12pt]{article}

\textheight=21cm \textwidth=16.3cm \oddsidemargin -.1cm
\evensidemargin -.1cm \topmargin= -0.5cm \headsep 30pt


\arraycolsep 1pt

\font\twlgot =eufm10 scaled \magstep1 \font\egtgot =eufm8
\font\sevgot =eufm7 \font\twlmsb =msbm10 scaled \magstep1
\font\egtmsb =msbm8 \font\sevmsb =msbm7

\newfam\gotfam
\def\pgot{\fam\gotfam\twlgot}
\textfont\gotfam\twlgot \scriptfont\gotfam\egtgot
\scriptscriptfont\gotfam\sevgot
\def\got{\protect\pgot}
\newfam\msbfam
\textfont\msbfam\twlmsb \scriptfont\msbfam\egtmsb
\scriptscriptfont\msbfam\sevmsb
\def\Bbb{\protect\pBbb}
\def\pBbb{\relax\ifmmode\expandafter\Bb\else\typeout{You cann't use
Bbb in text mode}\fi}
\def\Bb #1{{\fam\msbfam\relax#1}}

\newcommand{\gd}{{\got d}}

\def\op#1{\mathop{\fam0 #1}\limits}

\newcommand{\id}{{\rm Id\,}}
\newcommand{\Ker}{{\rm Ker\,}}

\newcommand{\beq}{\begin{equation}}
\newcommand{\eeq}{\end{equation}}
\newcommand{\ben}{\begin{eqnarray}}
\newcommand{\een}{\end{eqnarray}}
\newcommand{\be}{\begin{eqnarray*}}
\newcommand{\ee}{\end{eqnarray*}}
\newcommand{\bea}{\begin{eqalph}}
\newcommand{\eea}{\end{eqalph}}
\newcommand{\cA}{{\cal A}}

\newcommand{\cD}{{\cal D}}

\newcommand{\cJ}{{\cal J}}

\newcommand{\cQ}{{\cal Q}}
\newcommand{\cF}{{\cal F}}

\newcommand{\cO}{{\cal O}}

\newcommand{\cK}{{\cal K}}

\newcommand{\bL}{{\bf L}}

\newcommand{\bb}{{\bf 1}}
\newcommand{\hm}{{\rm Hom\,}}
\newcommand{\bll}{\bullet}
\newcommand{\dif}{{\rm Diff\,}}
\newcommand{\gf}{{\got f}}

\newcommand{\dl}{\delta}

\newcommand{\f}{\phi}

\newcommand{\m}{\mu}

\newcommand{\g}{\gamma}
\newcommand{\G}{\Gamma}

\newcommand{\di}{{\rm dim\,}}

\newcommand{\si}{\sigma}

\newcommand{\w}{\wedge}
\newcommand{\wt}{\widetilde}

\newcommand{\ol}{\overline}
\newcommand{\dr}{\partial}
\newcommand{\ar}{\op\longrightarrow}

\newcommand{\ot}{\otimes}

\newcommand{\ve}{\varepsilon}

\newcounter{eqalph}
\newcounter{equationa}
\newcounter{remark}
\newcounter{example}
\newcounter{theorem}
\newcounter{proposition}
\newcounter{lemma}
\newcounter{corollary}
\newcounter{definition}
\setcounter{remark}{0} \setcounter{example}{0}
\setcounter{theorem}{0} \setcounter{proposition}{0}
\setcounter{lemma}{0} \setcounter{corollary}{0}
\setcounter{definition}{0}

\newenvironment{eqalph}{\stepcounter{equation}
\setcounter{equationa}{\value{equation}} \setcounter{equation}{0}

\begin{eqnarray}}{\end{eqnarray}\setcounter{equation}{\value{equationa}}}

\def\theremark{\arabic{remark}}

\def\thetheorem{\arabic{theorem}}

\newenvironment{proof}{
{\it Proof:}}{}

\newenvironment{ex}{\refstepcounter{remark}{\it Example
\theremark:}}{}
\newenvironment{theo}{\refstepcounter{theorem}
{\bf Theorem \thetheorem:}}{}

\newenvironment{lem}{\refstepcounter{theorem}
{\bf Lemma \thetheorem:}}{}

\newenvironment{defi}{\refstepcounter{theorem}
{\bf Definition \thetheorem:}}{}

\newcommand{\mar}[1]{}

\hyphenation{ma-ni-fold La-gran-gi-ans di-men-si-o-nal
-di-men-si-o-nal La-gran-gi-an Ha-mil-to-ni-an multi-symplec-tic}

\begin{document}
\hbox{}

{\parindent=0pt

{\large\bf Differential operators on Schwartz distributions. Jet
formalism}
\bigskip

{\sc G. Sardanashvily}


{\sl Department of Theoretical Physics, Moscow State University,
Russia}

\bigskip
\bigskip

{\small Differential operators on Schwartz distributions
conventionally are defined as the transpose of differential
operators on functions with compact support. They do not exhaust
all differential operators. We follow algebraic formalism of
differential operators on modules over commutative rings. In a
general setting, Schwartz distributions on sections with compact
support of vector bundles on an arbitrary smooth manifold are
considered.}

}

\section{Introduction}

Quantum field theory provides examples of differential operators
and differential equations on distributions.

Let $U$ be an open subset of $\Bbb R^n$, $\cD(U)$ the space of
smooth real functions with compact support, and $\cD(U)'$ the
space of Schwartz distribution on $\cD(U)$. Differential operators
on $\cD(U)'$ conventionally are defined as the transpose of
differential operators on $\cD(U)$ \cite{dal}. However, they do
not exhaust all differential operators on $\cD(U)'$.

We follow algebraic formalism of differential operators on modules
over commutative rings and jets of modules (Section 2)
\cite{book97,kras}.

In a general setting, Schwartz distributions on sections with
compact support of a vector bundles $Y$ on an arbitrary smooth
manifold $X$ are considered. We follow familiar formalism of
distributions, not the nonlinear ones \cite{kunz}.

Let $Y(X)$ denote a $C^\infty(X)$-module of global sections of
$Y\to X$. Let $E\to X$ be a vector bundle. The $E(X)$-valued
differential operators defined on $Y(X)$ as a $C^\infty(X)$-module
coincide with the familiar ones (Section 3). They constitute the
$C^\infty(X)$-module (\ref{141}).

Let $\cD(Y)$ be a $C^\infty(X)$-module of sections with compact
support of $Y\to X$. It is provided with a $LF$-topology similar
to that on $\cD(U)$. With this topology, $\cD(U)$ is a nuclear
vector space and a topological $C^\infty(X)$-module. Differential
operators on $\cD(Y)$ as a $C^\infty(X)$-module are defined
(Section 4). We show that that there is one-to-one correspondence
between $E(X)$-valued differential operators on $\cD(Y)$ and those
on $Y(X)$ (Theorem \ref{130}), and that $\cD(Y)$-valued
differential operators on $\cD(Y)$ are continuous (Theorem
\ref{106}).

Let $\cD(Y)'$ be the topological dual of $\cD(Y)$ endowed with the
strong topology. It is a nuclear vector space whose topological
dual is $\cD(Y)$, and it is a topological $C^\infty(X)$-module.
Differential operators on $\cD(Y)'$ as a $C^\infty(X)$-module are
defined. In particular, the transpose of any $\cD(Y)$-valued
differential operator on $\cD(Y)$ is a differential operator on
$\cD(Y)'$ (Theorem \ref{205}). However, a differential operator on
$\cD(Y')$ need not be of this type. We show that a
$\cD(Y)'$-valued differential operator on $\cD(Y)'$ is the
transpose of a $\cD(Y)$-valued differential operator on $\cD(Y)$
iff it is continuous (Theorem \ref{220}).

\section{Differential operators on modules}

This Section summarizes the relevant material on differential
operator on modules over a commutative ring
\cite{book97,kras,sard09}.

Let $\cK$ be a commutative ring (i.e., a commutative unital
algebra) and $\cA$ a commutative $\cK$-ring. Let $P$ and $Q$ be
$\cA$-modules. The $\cK$-module $\hm_\cK (P,Q)$ of
$\cK$-homomorphisms $\Phi:P\to Q$ can be endowed with the two
different $\cA$-module structures
\mar{5.29}\beq
(a\Phi)(p)= a\Phi(p),  \qquad  (a\bll\Phi)(p) = \Phi (a p),\qquad
a\in \cA, \quad p\in P. \label{5.29}
\eeq
We refer to the second one as a $\cA^\bll$-module structure. Let
us put
\mar{spr172}\beq
\dl_a\Phi= a\Phi -a\bll\Phi, \qquad a\in\cA. \label{spr172}
\eeq

\begin{defi} \label{ws131} \mar{ws131}
An element $\Delta\in\hm_\cK(P,Q)$ is called a $k$-order
$Q$-valued differential operator on $P$ if
\be
\dl_{a_0}\circ\cdots\circ\dl_{a_k}\Delta=0
\ee
for any tuple of $k+1$ elements $a_0,\ldots,a_k$ of $\cA$. The set
$\dif_k(P,Q)$ of these operators inherits the $\cA$- and
$\cA^\bll$-module structures (\ref{5.29}).
\end{defi}

In particular, zero order differential operators are
$\cA$-homomorphisms $P\to Q$. A first order differential operator
$\Delta$ satisfies the condition
\be
\dl_b\circ\dl_a\,\Delta(p)= ba\Delta(p) -b\Delta(ap)
-a\Delta(bp)+\Delta(abp) =0, \quad a,b\in\cA.
\ee

Let $P=\cA$. Any zero order $Q$-valued differential operator
$\Delta$ on $\cA$ is defined by its value $\Delta(\bb)$. Then
there is an isomorphism
\be
\dif_0(\cA,Q)=Q
\ee
via the association
\be
Q\ni q\to \Delta_q\in \dif_0(\cA,Q), \qquad \Delta_q(\bb)=q,
\ee
A first order $Q$-valued differential operator $\Delta$ on $\cA$
fulfils the condition
\be
\Delta(ab)=b\Delta(a)+ a\Delta(b) -ba \Delta(\bb), \qquad
a,b\in\cA.
\ee
It is a $Q$-valued derivation of $\cA$ if $\Delta(\bb)=0$, i.e.,
the Leibniz rule
\mar{+a20}\beq
\Delta(ab) = \Delta(a)b + a\Delta(b), \qquad  a,b\in \cA,
\label{+a20}
\eeq
holds. Any first order differential operator on $\cA$ falls into
the sum
\be
\Delta(a)= a\Delta(\bb) +[\Delta(a)-a\Delta(\bb)]
\ee
of a zero order differential operator and a derivation.
Accordingly, there is an $\cA$-module decomposition
\mar{spr156'}\beq
\dif_1(\cA,Q) = Q \oplus\gd(\cA,Q), \label{spr156'}
\eeq
where $\gd(\cA,Q)$ is an $\cA$-module of $Q$-valued derivations of
$\cA$.

If $P=Q=\cA$, the derivation module $\gd\cA$ of $\cA$ is a Lie
$\cK$-algebra. Accordingly, the decomposition (\ref{spr156'})
takes the form
\mar{spr156}\beq
\dif_1(\cA) = \cA \oplus\gd\cA. \label{spr156}
\eeq

\begin{ex} \mar{115} \label{115}
Let $X$ be an $n$-dimensional real smooth  manifold coordinated by
$x^\m$, and let $C^\infty(X)$ be an $\Bbb R$-ring of smooth real
functions on $X$. There is one-to-one correspondence between the
derivations of $C^\infty(X)$ and the vector fields on $X$. It is
given by the expression
\mar{116}\beq
T(X)=TX(X)\ni u\leftrightarrow \bL_u\in \gd C^\infty(X), \qquad
\bL_u(f)=u^\m\dr_\m f, \qquad f\in C^\infty(X), \label{116}
\eeq
where $\bL_u$ denotes the Lie derivative along $u$.
\end{ex}

The study of $Q$-valued differential operators on an $\cA$-module
$P$ is reduced to that of $Q$-valued differential operators on a
ring $\cA$ as follows.

\begin{theo} \label{ws109} \mar{ws109}
Let us consider an $\cA$-homomorphism
\mar{n2}\beq
h_k: \dif_k(\cA,Q)\to Q, \qquad h_k(\Delta)=\Delta(\bb).
\label{n2}
\eeq
Any $k$-order $Q$-valued differential operator $\Delta\in
\dif_k(P,Q)$ on $P$ uniquely factorizes as
\mar{n13}\beq
\Delta:P\ar^{\gf_\Delta} \dif_k(\cA,Q)\ar^{h_k} Q \label{n13}
\eeq
through the homomorphism $h_k$ (\ref{n2}) and some homomorphism
\mar{n0}\beq
\gf_\Delta: P\to \dif_k(\cA,Q), \qquad (\gf_\Delta
p)(a)=\Delta(ap), \qquad a\in \cA, \label{n0}
\eeq
of an $\cA$-module $P$ to an $\cA^\bll$-module $\dif_k(\cA,Q)$.
The assignment $\Delta\to\gf_\Delta$ defines an
$\cA^\bll-\cA$-module isomorphism
\mar{n1}\beq
\dif_k(P,Q)=\hm_{\cA-\cA^\bll}(P,\dif_k(\cA,Q)). \label{n1}
\eeq
\end{theo}

In a different way, $k$-order differential operators on a module
$P$ are represented by zero order differential operators on a
module of $k$-order jets of $P$ as follows.

Given an $\cA$-module $P$, let us consider a tensor product
$\cA\otimes_\cK P$ of $\cK$-modules $\cA$ and $P$. We put
\mar{spr173}\beq
\dl^b(a\otimes p)= (ba)\otimes p - a\otimes (b p), \qquad p\in P,
\qquad a,b\in\cA.  \label{spr173}
\eeq
Let us denote by $\m^{k+1}$ a submodule of $\cA\ot_\cK P$
generated by elements of the type
\be
\dl^{b_0}\circ \cdots \circ\dl^{b_k}(a\otimes p).
\ee

\begin{defi} \mar{200} \label{200}
A $k$-order jet module $\cJ^k(P)$ of a module $P$ is the quotient
of the $\cK$-module $\cA\otimes_\cK P$ by $\m^{k+1}$. We denote
its elements $c\ot_kp$.
\end{defi}

In particular, a first order jet module $\cJ^1(P)$ is generated by
elements $\bb\ot_1 p$ modulo the relations
\mar{mos041}\beq
\dl^a\circ \dl^b(\bb\ot_1 p)= ab\otimes_1 p -b\otimes_1 (ap)
-a\otimes_1 (bp) +\bb\ot_1(abp) =0. \label{mos041}
\eeq

A $\cK$-module $\cJ^k(P)$ is endowed with the $\cA$- and
$\cA^\bll$-module structures
\mar{+a21}\beq
b(a\ot_k p)= ba\ot_k p, \qquad b\bll(a\otimes_k p)= a\otimes_k
(bp). \label{+a21}
\eeq
There exists a homomorphism
\mar{5.44}\beq
J^k: P\ni p\to \bb\otimes_k p\in \cJ^k(P) \label{5.44}
\eeq
of an $\cA$-module $P$ to an $\cA^\bll$-module $\cJ^k(P)$ such
that $\cJ^k(P)$, seen as an $\cA$-module, is generated by elements
$J^kp$, $p\in P$.

Due to the natural monomorphisms $\m^r\to \m^k$ for all $r>k$,
there are $\cA$-module epimorphisms of jet modules
\be
\pi^{i+1}_i: \cJ^{i+1}(P)\to \cJ^i(P).
\ee
In particular,
\mar{+a13}\beq
\pi^1_0:\cJ^1(P) \ni a\ot_1 p\to ap \in P.\label{+a13}
\eeq

\begin{theo} \label{t6} \mar{t6}
Any $k$-order $Q$-valued differential operator $\Delta$ on an
$\cA$-module $P$ factorizes uniquely
\be
\Delta: P\ar^{J^k} \cJ^k(P)\ar^{{\got f}^\Delta} Q
\ee
through the homomorphism $J^k$ (\ref{5.44}) and some
$\cA$-homomorphism ${\got f}^\Delta: \cJ^k(P)\to Q$. The
association $\Delta\to {\got f}^\Delta$ yields an
$(\cA^\bll-\cA)$-module isomorphism
\mar{5.50}\beq
\dif_k(P,Q)=\hm_{\cA}(\cJ^k(P),Q). \label{5.50}
\eeq
\end{theo}

Let us consider jet modules $\cJ^k=\cJ^k(\cA)$ of a ring $\cA$
itself. In particular, the first order jet module $\cJ^1$ consists
of the elements $a\otimes_1 b$, $a,b\in\cA$, subject to the
relations
\mar{5.53}\beq
ab\otimes_1 \bb -b\otimes_1 a -a\otimes_1 b +\bb\ot_1(ab) =0.
\label{5.53}
\eeq
The $\cA$- and $\cA^\bll$-module structures (\ref{+a21}) on
$\cJ^1$ read
\be
c(a\ot_1 b)=(ca)\ot_1 b,\qquad c\bll(a\ot_k b)=
a\ot_1(cb)=(a\ot_1b)c.
\ee

Theorems \ref{ws109} and \ref{t6} are completed with forthcoming
Theorem \ref{111} so that any one of them is a corollary of the
others.

\begin{theo} \mar{111} \label{111} There is an
isomorphism
\mar{mos074}\beq
\cJ^k(P)=\cJ^k\op\ot_{\cA^\bll-\cA} P, \qquad (a\ot_k bp)
\leftrightarrow (a\ot_1 b)\ot p. \label{mos074}
\eeq
\end{theo}

Then we have the $(\cA-\cA^\bll)$-module isomorphisms
\be
&&\hm_\cA(\cJ^k\ot P,Q)= \hm_{\cA-\cA^\bll} (P,\hm_\cA(\cJ^k,Q))=\\
&&\qquad
\hm_{\cA-\cA^\bll}(P,\dif_k(\cA,Q))=\dif_k(P,Q)=\hm_\cA(\cJ^k(P),Q).
\ee

Besides the monomorphism (\ref{5.44}):
\be
J^1: \cA\ni a\to \bb\otimes_1 a\in \cJ^1,
\ee
there exists an $\cA$-module monomorphism
\be
i_1: \cA \ni a  \to a\otimes_1 \bb\in \cJ^1.
\ee
With these monomorphisms, we have the canonical $\cA$-module
splitting
\mar{mos058}\beq
\cJ^1=i_1(\cA)\oplus \cO^1,  \qquad J^1(b)= \bb\ot_1 b=b\ot_1\bb +
(\bb\ot_1 b- b\ot_1\bb), \label{mos058}
\eeq
where the $\cA$-module $\cO^1$ is generated by elements $\bb\ot_1
b-b\ot_1 \bb$ for all $b\in\cA$. Let us consider a
$\cK$-homomorphism
\mar{mos045}\beq
d^1: \cA \ni b \to \bb\ot_1 b- b\ot_1\bb \in \cO^1. \label{mos045}
\eeq
This is a $\cO^1$-valued derivation of a $\cK$-ring $\cA$ which
obeys the Leibniz rule
\be
d^1(ab)= \bb\ot_1 ab-ab\ot_1\bb +a\ot_1 b  -a\ot_1 b  =ad^1b +
(d^1a)b.
\ee
It follows from the relation (\ref{5.53}) that $ad^1b=(d^1b)a$ for
all $a,b\in \cA$. Thus, seen as an $\cA$-module, $\cO^1$ is
generated by elements $d^1a$ for all $a\in\cA$.

Let $\cO^{1*}=\hm_{\cA}(\cO^1,\cA)$ be the dual  of an
$\cA$-module $\cO^1$. In view of the splittings (\ref{spr156}) and
(\ref{mos058}), the isomorphism (\ref{5.50}) leads to the duality
relation
\mar{5.81a}\beq
\gd\cA=\cO^{1*}, \qquad \gd\cA\ni u\leftrightarrow \f_u\in
\cO^{1*}, \qquad \f_u(d^1a)=u(a), \qquad  a\in \cA. \label{5.81a}
\eeq

\begin{ex} \mar{119} \label{119}
If $\cA=C^\infty(X)$ in Example \ref{115}, then $\cO^1=\cO^1(X)$
is a module of differential one-forms on $X$, and there is an
isomorphism $\cO^1(X)=T(X)^*$, besides the isomorphism
$T(X)=\cO^1(X)^*$ (\ref{5.81a}).
\end{ex}

Let us return to the first order jet module $\cJ^1(P)$ of an
$\cA$-module $P$. Due to the isomorphism (\ref{mos074}), the
isomorphism (\ref{mos058}) leads to the splitting
\mar{mos071}\ben
&& \cJ^1(P)= (\cA\oplus \cO^1)\op\ot_{\cA^\bll-\cA} P=
(\cA \op\ot_\cA P)\oplus (\cO^1\op\ot_{\cA^\bll-\cA} P), \label{mos071}\\
&& a\ot_1 bp\leftrightarrow  (ab +ad^1(b))\ot p. \nonumber
\een
Applying the epimorphism $\pi^1_0$ (\ref{+a13}) to this splitting,
one obtains the short exact sequence of $(\cA-\cA^\bll)$-modules
\mar{+175}\ben
&& 0\ar \cO^1\op\ot_{\cA^\bll-\cA} P\to \cJ^1(P)\ar^{\pi^1_0} P\ar 0, \label{+175}\\
&&  (a\ot_1 b -ab\ot_1 \bb)\ot p\to
(c\ot_1 \bb+ a\ot_1 b -ab\ot_1 \bb)\ot p \to cp. \nonumber
\een
It is canonically split by the $\cA^\bll$-homomorphism
\be
P\ni ap \to \bb\ot_1 ap= a\ot_1 p + d^1(a)\ot_1 p\in\cJ^1(P).
\ee
However, it need not be split by an $\cA$-homomorphism, unless $P$
is a projective $\cA$-module.

\begin{defi} \label{+176} \mar{+176}
A connection on an $\cA$-module $P$ is defined as an
$\cA$-homomorphism
\mar{+179'}\beq
\G:P\to \cJ^1(P), \qquad \G (ap)=a\G(p), \label{+179'}
\eeq
which splits the exact sequence (\ref{+175}).
\end{defi}

Given the splitting $\G$ (\ref{+179'}), let us define a
complementary morphism
\mar{+179} \beq
\nabla=J^1-\G: P\to \cO^1\op\ot_{\cA^\bll-\cA} P, \qquad
\nabla(p)= \bb\ot_1 p- \G(p).\label{+179}
\eeq
This also is called a connection though it in fact is a covariant
differential on a module $P$. This morphism satisfies the Leibniz
rule
\mar{+180}\beq
\nabla(ap)= d^1a \ot p +a\nabla(p), \label{+180}
\eeq
i.e., $\nabla$ is first order $(\cO^1\ot P)$-valued differential
operator on $P$. Thus, we come to the equivalent definition of a
connection \cite{kosz60}.

\begin{defi} \label{+181} \mar{+181}
A connection on an $\cA$-module $P$ is a $\cK$-homomorphism
$\nabla$ (\ref{+179}) which obeys the Leibniz rule (\ref{+180}).
\end{defi}

In view of the isomorphism (\ref{5.81a}), any connection in
Definition \ref{+181} determines a connection in the following
sense.

\begin{defi} \label{1016} \mar{1016}
A connection on an $\cA$-module $P$ is an $\cA$-homomorphism
\mar{1017}\beq
\nabla:\gd\cA\ni u\to \nabla_u\in \dif_1(P,P) \label{1017}
\eeq
such that, for each $u\in \gd\cA$, the first order differential
operator $\nabla_u$ obeys the Leibniz rule
\mar{1018}\beq
\nabla_u (ap)= u(a)p+ a\nabla_u(p), \quad a\in \cA, \quad p\in P.
\label{1018}
\eeq
\end{defi}

Definitions \ref{+181} and \ref{1016} are equivalent if
$\cO^1=\gd\cA^*$. For instance, this is the case of
$\cA=C^\infty(X)$ in Example \ref{119}.

In particular, let $P$ be a commutative $\cA$-algebra and $\gd P$
the derivation module of $P$ as a $\cK$-algebra. The $\gd P$ is
both a $P$- and $\cA$-modules. Then Definition \ref{1016} is
modified as follows.

\begin{defi} \label{mos088} \mar{mos088}
A connection on an $\cA$-algebra $P$ is an $\cA$-homomorphism
\mar{mos090}\beq
\nabla:\gd\cA\ni u\to \nabla_u\in \gd P\subset \dif_1(P,P),
\label{mos090}
\eeq
which is a connection on $P$ as an $\cA$-module, i.e., it obeys
the Leinbniz rule (\ref{1018}).
\end{defi}

For instance, if $P$ is an ideal of $\cA$, there is a unique
canonical connection $u\to\nabla_u=u$ on $P$.

\section{Differential operators on sections of a vector bundle}

Let $X$ be a smooth manifold which is customarily assumed to be
Hausdorff and second-countable (i.e., it has a countable base for
topology). Consequently, it has a locally compact space which is a
union of a countable number of compact subsets, a separable space,
a paracompact and completely regular space. Let $X$ be connected
and oriented.

Let $Y\to X$ be a vector bundle over $X$. Its global sections $s$
constitute a $C^\infty(X)$-module $Y(X)$.

Let $J^kY$ be a $k$-order jet manifold of $Y$ whose elements are
$k$-order jets of sections $s$ of $Y\to X$. It is a vector bundle
$J^kY\to X$ over $X$. There is a $C^\infty(X)$-module isomorphism
\mar{118}\beq
\cJ^k(Y(X))=J^kY(X) \label{118}
\eeq
of a $k$-order jet module $\cJ^k(Y(X))$ of $Y(X)$ and a module
$J^kY(X)$ of global sections of a $k$-order jet bundle $J^kY\to X$
of $Y\to X$ \cite{book09,sard09a}.

Let $E\to X$ be a vector bundle and $E(X)$ a $C^\infty(X)$-module
of global sections of $E$. By virtue of Theorem \ref{t6}, there is
the $C^\infty(X)$-module isomorphism (\ref{5.50}):
\mar{140}\beq
\dif_k(Y(X),E(X))=\hm_{C^\infty(X)}(\cJ^k(Y(X)),E(X)) \label{140}
\eeq
of the module $\dif_k(Y(X),E(X))$ of $k$-order $E(X)$-valued
differential operators on $Y(X)$ and the module
$\hm_{C^\infty(X)}(\cJ^k(Y(X)),E(X))$ of
$C^\infty(X)$-homomorphisms of $\cJ^k(Y(X))$ to $E(X)$. Since
$\cJ^k(Y(X))$ (\ref{118}) is a projective $C^\infty(X)$-module of
finite rank, the isomorphism (\ref{140}) takes the form
\mar{141}\ben
&& \dif_k(Y(X),E(X))=\hm_{C^\infty(X)}(\cJ^k(Y(X)),E(X))=
\label{141}\\
&& \qquad \cJ^k(Y(X))^*\op\ot_{C^\infty(X)} E(X)= ((J^kY)^*\ot
E)(X).\nonumber
\een
It follows that there is one-to-one correspondence between the
$k$-order $E(X)$-valued differential operators on $Y(X)$ and the
global sections of the vector bundle $(J^kY)^*\ot E$ where
$(J^kY)^*$ is the dual of a vector bundle $J^kY\to X$.

In particular, let $E=Y$. In accordance with Definition
\ref{1016}, a connection $\nabla$ on a module $Y(X)$ is a
$C^\infty(X)$-homomorphism
\mar{142}\beq
\nabla: \gd C^\infty(X)=T(X)\ni u\to \nabla_u\in \dif_1(Y(X),Y(X))
\label{142}
\eeq
such that, for each vector field $u$ on $X$, a first order
differential operator $\nabla_u$ obeys the Leibniz rule
(\ref{1018}):
\be
\nabla_u(fs)=(\bL_uf)s + f\nabla_us, \qquad s\in Y(X), \qquad f\in
C^\infty(X).
\ee
There is one-to-one correspondence between the connections
$\nabla^\G$ on a module $Y(X)$ and the linear connections $\G$ on
a vector bundle $Y\to X$ such that $\nabla^\G$ is the covariant
differential with respect to $\G$ \cite{book09,sard09a}.

For instance, let
\mar{150}\beq
R=X\times \Bbb R\to X \label{150}
\eeq
be a trivial bundle. Its global sections are smooth real functions
on $X$, i.e., $C^\infty(X)=R(X)$. A $k$-order jet manifold of this
bundle is diffeomorphic to a Whitney sum
\mar{113}\beq
J^kR= R\oplus T^*X\oplus\cdots \oplus \op\vee^k T^*X \label{113}
\eeq
of symmetric products of the cotangent bundle $T^*X$ to $X$. By
virtue of the isomorphism (\ref{118}), a $k$-order jet module
$\cJ^k(C^\infty(X))=\cJ^k(R(X))$ of a ring $C^\infty(X)$ is a
module of sections $J^kR(X)$ of the vector bundle $J^kR\to X$
(\ref{113}). Let $E\to X$ be a vector bundle. Then the module
$\dif_k(C^\infty(X),E(X))$ of $E(X)$-valued differential operators
on $C^\infty(X)$ is isomorphic to a module of sections of the
vector bundle $(J^kR)^*\ot E$ where
\mar{126}\beq
(J^kR)^*=R\oplus TX\oplus\cdots \oplus \op\vee^k TX \label{126}
\eeq
is the dual of the vector bundle (\ref{113}). Thus, we have
\mar{230}\beq
\dif_k(C^\infty(X),E(X))= ((J^kR)^*\ot E)(X). \label{230}
\eeq

In particular, let $E=R$. Then the module of $k$-order
$C^\infty(X)$-valued
 differential operators on $C^\infty(X)$ is
\mar{231}\beq
\dif_k(C^\infty(X)= (J^kR)^*(X). \label{231}
\eeq
For instance, the module $\dif_1(C^\infty(X))$ of first order
differential operators on $C^\infty(X)$ is isomorphic to
$C^\infty(X) \oplus T(X)$ in accordance with the decomposition
(\ref{spr156}).

\section{Differential operators on sections with compact support}

Let us consider a $C^\infty(X)$-module $\cD(Y)\subset Y(X)$ of
sections with compact support of a vector bundle $Y\to X$. It is
endowed with the following topology \cite{cirr}.

Let $J^\infty Y$ be the topological inductive limit of $J^kY$,
$k\in \Bbb N$ which is a Fr\'echet (not smooth) manifold
\cite{book09,sard09a}. It is a topological vector bundle
\be
\pi^\infty_0:J^\infty Y\to X.
\ee
There is a certain class $\cQ^0_\infty Y$ of real functions on
$J^\infty Y$ called the smooth functions on $J^\infty Y$. Given a
function $f\in \cQ^0_\infty Y$ and a point $z\in J^\infty Y$,
there exists an open neighborhood $U$ of $z$ such that $f|_U$ is
the pull-back of a smooth function on some finite order jet
manifold $J^kY$.

Let $\cF_\infty Y\subset \cQ^0_\infty Y$ denote a subset of smooth
functions $\f$ on $J^\infty Y$ which are of finite jet order
$[\f(K)]$ on a subset $(\pi^\infty_0)^{-1}(K)\subset J^\infty Y$
over any compact subset $K\subset X$, and which are linear on
fibres of $J^\infty Y\to X$. With $\f\in \cF_\infty Y$, one can
define a seminorm
\mar{101}\beq
p_\f(s)=\op\sup_{x\in X}|J^\infty s^*\f| \label{101}
\eeq
on $\cD(Y)$ where $J^\infty s$ denotes the jet prolongation of a
section $s$ to a section of $J^\infty Y\to X$. The seminorm
(\ref{101}) is well defined because
\be
(J^\infty s^*\f)= (J^{[\f({\rm supp}\,s)]} s^*\f)(x)
\ee
is a smooth function with compact support on $X$.

The set of seminorms $p_\f(s)$ (\ref{101}) for all functions
$\f\in\cF_\infty Y$ on $J^\infty Y$ yields a locally convex
topology on $\cD(Y)$ called the $LF$-topology. With this topology,
$\cD(Y)$ is a nuclear complete reflexive vector space, an
inductive limit of a countable family of separable Fr\'echet
spaces \cite{trev}.

If $Y=R$, the $\cD(R)=\cD(X)$ is the well-known nuclear space of
test functions on a manifold $X$. There is a $C^\infty(X)$-module
isomorphism
\mar{102}\beq
\cD(Y)=Y(X)\op\otimes_{C^\infty(X)} \cD(X)\subset
Y(X)\op\otimes_{C^\infty(X)} Y(X)=Y(X). \label{102}
\eeq

\begin{lem} \mar{104} \label{104}
Any global section $\si$ of the dual vector bundle $Y^*\to X$
yields a continuous homomorphism of the topological vector spaces
\mar{105}\beq
\si: \cD(Y)\ni s \to (s,\si)\in \cD(X). \label{105}
\eeq
\end{lem}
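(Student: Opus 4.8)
The plan is as follows. The assignment $s\mapsto \si(s)=(s,\si)$ is patently $\Bbb R$-linear (indeed $C^\infty(X)$-linear) in $s$, and the pairing is taken pointwise, so that ${\rm supp}\,(s,\si)\subset {\rm supp}\,s$ is compact. Hence $(s,\si)$ is a smooth function of compact support, i.e.\ an element of $\cD(X)$, and $\si$ is a well-defined homomorphism of vector spaces $\cD(Y)\to\cD(X)$. By the very definition of the $LF$-topologies through the seminorms (\ref{101}), its continuity amounts to bounding, for each $\psi\in\cF_\infty R$, the seminorm $p_\psi$ of the image $\si(s)$ by seminorms on $\cD(Y)$. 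To do this I would first recast $\si$ geometrically as the linear fibre bundle morphism $\wh\si:Y\to R$ over $X$, $\wh\si:y_x\mapsto (y_x,\si(x))$, so that $\wh\si\circ s$ is the section of $R\to X$ representing the function $(s,\si)$.

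Next I would pass to the jet prolongations $J^k\wh\si:J^kY\to J^kR$. By functoriality of the prolongation of morphisms over $X$ one has $J^k\wh\si\circ J^ks=J^k(\wh\si\circ s)$, and in the inductive limit $J^\infty\wh\si\circ J^\infty s=J^\infty(s,\si)$. Each $J^k\wh\si$ is linear on the fibres of $J^kY\to X$ because $\wh\si$ is, and it is compatible with the jet projections, $\pi^\infty_m\circ J^\infty\wh\si=J^m\wh\si\circ\pi^\infty_m$. Given $\psi\in\cF_\infty R$, I would then set $\f=(J^\infty\wh\si)^*\psi=\psi\circ J^\infty\wh\si$ and check that $\f\in\cF_\infty Y$: it is a smooth function, linear on the fibres of $J^\infty Y\to X$ as a composite of fibrewise-linear maps; and over a compact $K\subset X$, if $\psi$ restricts on $(\pi^\infty_0)^{-1}(K)$ to the pull-back of a smooth function on $J^mR$ with $m=[\psi(K)]$, then, by the compatibility above, the restriction of $\f$ is the pull-back of a smooth function on $J^mY$, so that $[\f(K)]\le[\psi(K)]<\infty$.

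With this $\f$ the pull-backs collapse, since $\f=\psi\circ J^\infty\wh\si$ gives
\[
J^\infty s^*\f=(J^\infty\wh\si\circ J^\infty s)^*\psi=J^\infty(s,\si)^*\psi,
\]
whence $p_\psi(\si(s))=\sup_{x\in X}|J^\infty(s,\si)^*\psi|=\sup_{x\in X}|J^\infty s^*\f|=p_\f(s)$. Thus every generating seminorm on $\cD(X)$ pulls back under $\si$ to a generating seminorm on $\cD(Y)$, and the resulting identity $p_\psi(\si(s))=p_\f(s)$ (in particular the bound $p_\psi(\si(s))\le p_\f(s)$) yields continuity at once.

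The step I expect to be the crux is the verification that $\f\in\cF_\infty Y$, and specifically the order estimate $[\f(K)]\le[\psi(K)]$; this rests on the compatibility of the prolonged morphism $J^\infty\wh\si$ with the jet projections, i.e.\ on the Leibniz-rule bookkeeping showing that the $m$-jet of $(s,\si)$ depends only on the $m$-jet of $s$, the coefficients being the partial derivatives of $\si$ up to order $m$. The remaining items (linearity, the support inclusion, and the seminorm identity) are formal.
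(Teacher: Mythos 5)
Your proof is correct and follows essentially the same route as the paper: both pull the generating seminorm $p_\psi$ back along the jet prolongation of the bundle morphism $\si:Y\to R$ to obtain a function $\f=\psi\circ J^\infty\si\in\cF_\infty Y$ with the seminorm identity $p_\psi((s,\si))=p_\f(s)$, from which continuity is immediate. Your write-up merely supplies more detail than the paper does on the verification that $\f$ is fibrewise linear and of finite jet order over compact subsets.
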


\begin{proof}
Let $p_F$ (\ref{101}) be a seminorm on $\cD(X)$ where
$F\in\cF_\infty R$. A global section $\si$ of $Y^*\to X$ defines a
bundle morphism $\si: Y\to R$ over $X$ possessing  a jet
prolongation
\be
J^\infty\si: J^\infty Y\ar_X J^\infty R.
\ee
Then we have a smooth real function
\mar{153}\ben
&& F_\si=F\circ J^\infty\si: J^\infty Y\ar_X J^\infty
R\to \Bbb R, \label{153}\\
&& F_\si(z)=(F\circ J^{[F(\pi^\infty_0(z))]}\si)(z), \qquad z\in J^\infty
Y,\nonumber
\een
on $J^\infty Y$ which belongs to $\cF_\infty Y$. The function
$F_\si$ (\ref{153}) yields the seminorm $p_{F_\si}$ (\ref{101}) on
$\cD(Y)$ such that
\be
p_F((s,\si))=p_{F_\si}(s)
\ee
for all $s\in \cD(Y)$. It follows that $p_F((s,\si))<\ve$ iff
$p_{F_\si}(s)<\ve$ and, consequently, $\si$ (\ref{105}) is an open
continuous map.
\end{proof}

In the case of $Y=R$, the homomorphism (\ref{105}) is a
multiplication
\be
g: \cD(X)\ni f\to gf \in\cD(X), \qquad g\in C^\infty(X),
\ee
which thus is an open continuous map. Consequently, $\cD(X)$ is a
topological $C^\infty(X)$-algebra.

Given a section $\si$ of $Y^*\to X$ and a function $f\in
C^\infty(X)$, let consider the homomorphism (\ref{105}):
\be
(f\si) (s)=f\si(s)= \si(fs), \qquad s\in \cD(Y).
\ee
Since the morphisms $f\si$ and $\si$ for any $\si\in Y^*(X)$ are
continuous and open, the multiplication $s\to fs$ also is an open
continuous homomorphism of $\cD(Y)$. It follows that $\cD(Y)$ is a
topological $C^\infty(X)$-module. Accordingly, the homomorphism
(\ref{105}) is a continuous $C^\infty(X)$-homomorphism.

Let $\cJ^k(\cD(Y))$ be a $k$-order jet module of a
$C^\infty(X)$-module $\cD(Y)$. It is a submodule of $\cJ^k(Y(X))$
and, due to the isomorphisms (\ref{118}) and (\ref{102}), a
submodule
\mar{250}\beq
\cJ^k(\cD(Y))= \cD(J^k(X))\subset J^kY(X) \label{250}
\eeq
of sections with compact support of the jet bundle $J^kY\to X$.

\begin{theo} \mar{130} \label{130} Let $E\to X$ be a vector bundle. There is one-to-one
correspondence between $k$-order $E(X)$-valued  differential
operators on sections and sections with compact support of $Y\to
X$.
\end{theo}

\begin{proof}
Of course, any differential operator on $Y(X)$ also is that on
$\cD(Y)$. Let $\Delta$ be a $k$-order $E(X)$-valued  differential
operator on a $C^\infty(X)$-module $\cD(Y)$. By virtue of Theorem
\ref{t6} and the isomorphism (\ref{250}), it defines a unique
$C^\infty(X)$-homomorphism
\be
{\got f}^\Delta: \cJ^k(\cD(Y))=\cD(J^k(X))\to E(X)
\ee
of sections with compact support of a vector jet bundle $J^kY\to
X$ to $E(X)$. Let us show that this morphism is extended to an
arbitrary global section $s$ of $J^kY\to X$. A smooth manifold $X$
admits an atlas whose cover consists of a countable set of open
subsets $U_i$ such that that their closures $\ol U_i$ are compact
\cite{greub}. Let $\{f_i\}$ be a subordinate partition of unity,
where each $f_i$ is a smooth function with a support
supp$f_i\subset U_i\subset \ol U_i$, i.e., with compact support.
Each point $x\in X$ has an open neighborhood which intersects only
a finite number of supp$f_i$, and
\be
\op\sum_i f_i(x)=1, \qquad x\in X.
\ee
Then one can put
\be
s=\op\sum_i f_is, \qquad f_is\in \cD(J^kY)=\cJ^k(\cD(Y)),
\ee
and define a $C^\infty(X)$-homomorphism
\be
{\got f}^\Delta (s)= \op\sum_i {\got f}^\Delta(f_is)
\ee
of $J^kY(X)=\cJ^k(Y(X))$ to $E(X)$. By virtue of Theorem \ref{t6},
it provides a $k$-order $E(X)$-valued differential operator on
$Y(X)$.
\end{proof}

It follows from Theorem \ref{130} and the isomorphism (\ref{141})
that
\mar{131}\beq
\dif_k(\cD(Y),E(X))=\dif_k(Y(X),E(X))=((J^kY)^*\ot E)(X).
\label{131}
\eeq

\begin{lem} \mar{190} \label{190} Any $E(X)$-valued differential
operator on $\cD(Y)$ is $\cD(E)$-valued.
\end{lem}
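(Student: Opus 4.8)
The plan is to exploit the fact that a differential operator is local, which is made explicit once $\Delta$ is identified with a bundle morphism via the isomorphism (\ref{131}), and then to observe that a local operator cannot enlarge supports.

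First I would invoke Theorem \ref{130}: a $k$-order $E(X)$-valued differential operator $\Delta$ on $\cD(Y)$ is the restriction of a unique such operator $\wt\Delta$ on $Y(X)$. By the isomorphism (\ref{131}), this operator is given by a global section $\Phi$ of $(J^kY)^*\ot E$, that is, by a linear bundle morphism $\Phi:J^kY\to E$ over $X$, so that $\Delta(s)(x)=\wt\Delta(s)(x)=\Phi(J^ks(x))$ for every $s\in\cD(Y)$ and $x\in X$, where $J^ks$ denotes the $k$-jet prolongation of $s$. Equivalently, one factorizes $\Delta=\gf^\Delta\circ J^k$ through the jet module by Theorem \ref{t6} and the isomorphism (\ref{250}), and reads off the local form of the $C^\infty(X)$-homomorphism $\gf^\Delta$ from (\ref{131}).

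The key observation is then locality. Since $\Phi$ is a morphism of vector bundles over the identity of $X$, it sends the zero section to the zero section. If $x\notin{\rm supp}\,s$, then $s$ vanishes on an open neighbourhood of $x$, hence all its partial derivatives up to order $k$ vanish at $x$, so $J^ks(x)=0$ and consequently $\Delta(s)(x)=\Phi(0)=0$. Therefore ${\rm supp}\,\Delta(s)\subseteq{\rm supp}\,s$, which is compact, and $\Delta(s)$ is a section of $E\to X$ with compact support, i.e. $\Delta(s)\in\cD(E)$.

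The only step that is not purely formal is the passage from the module homomorphism $\gf^\Delta$ to an honest fibrewise bundle morphism $\Phi$; this is exactly what (\ref{131}) supplies, resting on the fact that $\cJ^k(Y(X))=J^kY(X)$ is a projective $C^\infty(X)$-module of finite rank. I expect this identification --- and hence the locality it encodes --- to be the main, and essentially the only, point of the argument; once it is in place, the support inclusion and the conclusion are immediate.
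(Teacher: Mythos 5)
Your proof is correct and follows essentially the same route as the paper's: factorize $\Delta$ through the jet module via Theorem \ref{t6} and the identifications (\ref{250}) and (\ref{131}), then conclude by locality. The paper's own proof is terser---it simply asserts that the resulting $C^\infty(X)$-homomorphism $\gf^\Delta:\cD(J^kY)\to E(X)$ is $\cD(E)$-valued ``due to the isomorphism (\ref{250})''---so your explicit support argument (${\rm supp}\,\Delta(s)\subseteq{\rm supp}\,s$ because $J^ks$ vanishes off ${\rm supp}\,s$ and the bundle morphism acts fibrewise) usefully makes explicit the step the paper leaves implicit.
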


\begin{proof} By virtue of Theorem \ref{t6}, any $k$-order $E(X)$-valued
 differential operator $\Delta$ on $\cD(Y)$ factorizes
through a $C^\infty(X)$-homomorphism
\be
{\got f}^\Delta: \cJ^k(\cD(Y))\to E(X),
\ee
which is $\cD(E)$-valued due to the isomorphism (\ref{250}).
\end{proof}

In particular, let $E=Y$. Then $Y(X)$-valued differential
operators on $\cD(Y)$ are $\cD(Y)$-valued.

Let $Y=R$ and $\cD(Y)=\cD(X)$ a $C^\infty(X)$-algebra of test
functions on $X$. By virtue of Theorem \ref{130}, there is
one-to-one correspondence between $E(X)$-valued differential
operators on $\cD(X)$ and $C^\infty(X)$. Then it follows from the
isomorphism (\ref{230}) that
\be
\dif_k(\cD(X),E(X))= ((J^kR)^*\ot E)(X).
\ee
In particular, the derivations (\ref{116}) of $C^\infty(X)$ also
are derivations of an $\Bbb R$-algebra $\cD(X)$. Since $\cD(X)$ is
an ideal of $C^\infty(X)$, there exists a unique canonical
connection $\nabla_u=u$ on $\cD(X)$.

\begin{theo} \mar{106} \label{106}
Any $\cD(Y)$-valued differential operator on a
$C^\infty(X)$-module $\cD(Y)$ is continuous.
\end{theo}

\begin{proof}
Let $p_\f$ (\ref{101}) be a seminorm on $\cD(X)$ where
$F\f\in\cF_\infty Y$. By virtue of Theorem \ref{t6}, any $k$-order
$Y(X)$-valued  differential operator $\Delta$ on $\cD(Y)$ yields a
bundle morphism
\be
\ol{\got f}^\Delta: J^k Y\ar_X Y
\ee
such that, given a section $s$ of $Y\to X$, we have
\be
\Delta(s)=\ol{\got f}\circ J^ks.
\ee
Let us consider its jet prolongations
\be
J^r\ol{\got f}^\Delta: J^{r+k} Y\ar_X J^r Y, \qquad
J^\infty\ol{\got f}^\Delta: J^\infty Y\ar_X J^\infty Y.
\ee
Then we have a smooth real function
\mar{195}\ben
&& \phi_\Delta=\phi\circ J^\infty\ol{\got f}^\Delta: J^\infty Y\ar_X J^\infty
Y\to \Bbb R, \label{195}\\
&& \phi_\Delta(z)=(\phi\circ J^{[\phi(\pi^\infty_0(z))]}\ol{\got f}^\Delta)(z),
\qquad z\in J^\infty Y,\nonumber
\een
on $J^\infty Y$ which belongs to $\cF_\infty Y$. The function
$\phi_\Delta$ (\ref{195}) yields the seminorm $p_{\phi_\Delta}$
(\ref{101}) on $\cD(Y)$. Then we have
\be
p_\phi(\Delta(s))=p_{\phi_\Delta}(s).
\ee
It follows that $p_\phi(\Delta(s))<\ve$ iff
$p_{\phi_\Delta}(s)<\ve$ and, consequently, $\Delta$ is an open
continuous map.
\end{proof}

\section{Differential operators on Schwartz distributions}

Given a $LF$-space $\cD(Y)$ of sections with compact support of a
vector bundle $Y\to X$, let $\cD(Y)'$ be the topological dual of
$\cD(Y)$. Its elements are continuous forms
\be
\cD(Y)'\ni\psi: \cD(Y)\ni s\to <s,\psi>\in\Bbb R
\ee
on $\cD(Y)$ called the Schwartz distributions. The vector space
$\cD(Y)'$ is provided with the strong topology (which coincides
with all topologies of uniform converges). It is nuclear, and the
topological dual of $\cD(Y)'$ is $\cD(Y)$. A $LF$-topology of
$\cD(Y)$ also coincides with all topologies of uniform converges
\cite{trev}.

For instance, let $Y=R$ (\ref{150}). Then $\cD(Y)=\cD(X)$ is the
space of test functions on a manifold $X$ and its topological dual
$\cD(X)'$ is the familiar space of Schwartz distributions on test
functions on a manifold $X$. Since a $LF$-topology is finer than
the topology on $\cD(X)$ induced by the inductive limit topology
of the space $K(X)$ of continuous real functions on $X$, any
measure on $X$ exemplifies a Schwartz distribution. For instance,
any density
\be
L\in \op\w^n T^*X(X),  \qquad n=\di X,
\ee
on an oriented manifold $X$ is a Schwartz distribution on
$\cD(X)$.

Let $\Delta$ be a continuous homomorphism of a topological vector
space $\cD(Y)$. Then
\mar{158}\beq
\Delta':\cD(Y)'\ni \psi\to \psi\circ \Delta\in\cD(Y)', \label{158}
\eeq
is a morphism of the topological dual $\cD(Y)'$ of $\cD(Y)$. It is
called the transpose or the dual of $\Delta$. We have
\be
<\Delta (s),\psi>=<s,\Delta'(\psi)>.
\ee
Since topologies on $\cD(Y)$ and $\cD(Y)'$ coincide with the weak
ones, the transpose operator $\Delta'$ (\ref{158}) is continuous.

For instance, the transpose (\ref{158}) of the multiplication
$s\to fs$, $f\in C^\infty(X)$, in $\cD(Y)$ is the multiplication
\mar{202}\beq
\psi\to f\psi, \qquad <fs,\psi>=<s,f\psi> \label{202}
\eeq
which makes $\cD(Y)'$ into a topological $C^\infty(X)$-module. In
particular, $\cD(X)'$ also is a $C^\infty(X)$-module.

Let $\si$ be a global section of the dual bundle $Y^*\to X$. By
virtue of Lemma \ref{104}, it defines the continuous
$C^\infty(X)$-homomorphism (\ref{105}) of $\cD(Y)$ to $\cD(X)$
and, accordingly, the dual $C^\infty(X)$-homomorphism
\mar{155}\beq
\si': \cD(X)'\ni\xi\to \xi\circ\si\in \cD(Y)'. \label{155}
\eeq
It follows that there is a $C^\infty(X)$-module monomorphism
\mar{170}\beq
Y^*(X)\op\ot_{C^\infty(X)} \cD(X)'\to \cD(Y)' \label{170}
\eeq
such that
\be
(\si\ot \xi)(s)=\xi((s,\si)), \qquad s \in \cD(Y), \qquad \si\in
Y^*(X), \qquad \xi\in \cD(X)'.
\ee

\begin{theo} \mar{171} \label{171}
The monomorphism (\ref{155}) is a $C^\infty(X)$-module isomorphism
\mar{172}\beq
\cD(Y)'=Y^*(X)\op\ot_{C^\infty(X)} \cD(X)'. \label{172}
\eeq
\end{theo}

\begin{proof}
A vector bundle $Y\to X$ of fibre dimension $m$ admits a finite
atlas $\{(U_i,h_i),\rho_{ij}\}$, $i,j=1,\ldots,k$, \cite{greub}.
Given a smooth partition of unity $\{f_i\}$ subordinate to a cover
$\{U_i\}$, let us put
\be
l_i=f_i(f_1^2+\cdots +f_k^2)^{-1/2}.
\ee
It is readily observed that $\{l_i^2\}$ also is a partition of
unity subordinate to $\{U_i\}$. Then any section $s\in \cD(Y)$ is
represented by a tuple $(s_1,\ldots,s_k)$ of local $\Bbb
R^m$-valued functions $s_i=h_i\circ s|_{U_i}$ which fulfil the
relations
\mar{spr709}\beq
s_i=\op\sum_j \rho_{ij}(s_j)l^2_j. \label{spr709}
\eeq
Let us consider a topological vector space
\mar{175}\beq
\op\oplus^{mk}\cD(X), \label{175}
\eeq
which also is a topological $C^\infty(X)$-module. There are both a
continuous $C^\infty(X)$-monomorphism
\be
\g:\cD(Y)\ni s \to (l_1s_1,\ldots,l_ks_k)\in \op\oplus^{mk}\cD(X)
\ee
and a continuous $C^\infty(X)$-epimorphism
\mar{177}\beq
\Phi:\op\oplus^{mk}\cD(X) \ni (t_1,\ldots,t_k)\to (\wt
s_1,\ldots,\wt s_k)\in \cD(Y), \qquad \wt s_i=\op\sum_j
\rho_{ij}(l_jt_j). \label{177}
\eeq
In view of the relations (\ref{spr709}),
\be
\Phi\circ \g=\id \cD(Y),
\ee
and we have a decomposition
\mar{174}\ben
&& \op\oplus^{mk}\cD(X)=\g(\cD(Y)) \oplus \Ker \Phi, \label{174}\\
&& t_i=[l_i\op\sum_j\rho_{ij}(l_jt_j)] + [t_i - l_i\op\sum_j\rho_{ij}(l_jt_j)], \nonumber
\een
where $\g(\cD(Y))$ consists of elements $(t_i)$ satisfying the
condition
\mar{179}\beq
t_i = l_i\op\sum_j\rho_{ij}(l_jt_j). \label{179}
\eeq
The topological dual of the topological vector space (\ref{175})
is a $C^\infty(X)$-module
\mar{185}\beq
\op\oplus^{mk}\cD(X)' \label{185}
\eeq
with elements $(\ol t_1,\ldots,\ol t_k)$. The epimorphism $\Phi$
(\ref{177}) yields a $C^\infty(X)$-monomorphism
\be
\Phi':\cD(Y)'\ni \xi \to \xi\circ\Phi \in \op\oplus^{mk}\cD(X)'
\label{178}
\ee
such that $\Phi'(\cD(Y)')$ vanishes on $\Ker \Phi$ in the
decomposition (\ref{174}). To describe $\Phi'(\cD(Y)')$, let us
consider the dual vector bundle $Y^*\to X$ provided with the
conjugate atlas $\{(U_i,\ol h_i),\ol\rho_{ij}\}$ such that, for
arbitrary sections $s$ of $Y\to X$ and $\si$ of $Y^*\to X$, the
equality
\mar{180}\beq
(h_i\circ s, \ol h_i\circ \si)|_{U_i\cap U_j}=(\rho_{ij}\circ
h_j\circ s, \ol\rho_{ij}\circ \ol h_j\circ \ol s) = (h_j\circ s,
\ol h_j\circ \si)_{U_i\cap U_j} \label{180}
\eeq
holds. Then it is readily verified that the image $\Phi'(\cD(Y)')$
of $\cD(Y)'$ in the $C^\infty(X)$-module (\ref{185}) consists of
elements $(\ol t_i)$ satisfying the condition
\mar{189}\beq
\ol t_i = l_i\op\sum_j\ol\rho_{ij}(l_jt_j) \label{189}
\eeq
(cf. the condition (\ref{179})). This fact leads to the
isomorphism (\ref{172}).
\end{proof}

Since Schwartz distributions on sections with compact support of a
vector bundle $Y\to X$ constitute a $C^\infty(X)$ module
$\cD(Y)'$, differential operators on them can be introduced in
accordance with Definition \ref{ws131}.

We restrict our consideration to $\cD(Y)'$-valued differential
operators on $\cD(Y)'$. Of course, any multiplication (\ref{202})
is a zero-order differential operator on $\cD(Y)'$.

In accordance with Theorem \ref{106}, any $\cD(Y)$-valued
differential operator $\Delta$ on a $C^\infty(X)$-module $\cD(Y)$
of sections with compact support is a continuous morphism of a
topological vector space $\cD(Y)$. Then it defines the dual
morphism $\Delta'$ (\ref{158}) of $\cD(Y)'$.

\begin{theo} \mar{205} \label{205} The transpose $\Delta'$
(\ref{158}) of a $k$-order differential operator $\Delta$ on
$\cD(Y)$ is a differential operator on $\cD(Y)'$ in accordance
with Definition \ref{ws131}.
\end{theo}

\begin{proof} The proof is based on the fact that $\dl_f\Delta'$,
$f\in C^\infty(X)$, (\ref{spr172}) is the transpose of
$-\dl_f\Delta$.
\end{proof}

For instance, any connection $\nabla$ (\ref{142}) on $Y(X)$ and,
consequently, on $\cD(Y)$ define the transpose $\nabla_u'$ on
$\cD(Y)'$ for any vector field $u$ on $X$. We have
\be
&& <\f, \nabla_u'(f\psi)>= <f\nabla_u(\f),\psi>=<\nabla(f\f),\psi>
-<\bL_u(f)\f,\psi>= \\
&& \qquad <\f,f\nabla_u'(\psi)> +<\f,\bL_{-u}(f)\psi>.
\ee
A glance at this equality shows that $-\nabla_u'$ is a connection
on $\cD(Y)'$.

In particular, let $Y=R$, and let $\bL_u$, $u\in TX$, be the
derivation (\ref{116}) of $\cD(X)$. Its transpose $\bL_u'$ is
called the Lie derivative of Schwartz distributions $\psi\in
\cD(X)'$ along $u$. In particular, if
\be
\cD(X)'\ni \psi=\ol\psi d^nx
\ee
is a density on $X$, then
\be
\bL_u'(\psi)=\bL_{-u}(\psi)=-d(u\rfloor\psi)=-\dr_\m(u^\m\ol\psi)d^nx.
\ee
It is a derivation of $\cD(X)'$ because
\be
\bL_u'(f\psi)=\bL_{-u}(f\psi)=\bL_{-u}(f)\psi +f\bL_{-u}(\psi).
\ee

The transpose $\Delta'$ (\ref{158}) on $\cD(Y)'$ of a differential
operator $\Delta$ on $\cD(Y)$ is continuous.

However, a differential operator on $\cD(Y)'$ need not be the
transpose of a differential operator on $\cD(Y)$. Since $\cD(Y)$
is reflexive and topologies on $\cD(Y)$ and $\cD(Y)'$ coincide
with the weak ones, one can show the following.

\begin{theo} \mar{220} \label{220} A differential operator on
$\cD(Y)'$ is the transpose of a differential operator on $\cD(Y)$
iff it is continuous.
\end{theo}

It follows that there is one-to-one correspondence between
continuous differential operators on $\cD(Y)'$ and differential
operators on $\cD(Y)$ whose module is isomorphic to
\mar{245}\beq
\dif_k(\cD(Y)) = ((J^kY)^*\op\ot_X Y)(X) \label{245}
\eeq
in accordance with the isomorphism (\ref{131}).

A $k$-order jet module $\cJ^k(\cD(Y)')$ of a $C^\infty(X)$-module
$\cD(Y)'$ is introduced in accordance with Definition \ref{200}.
By virtue of Theorem \ref{t6}, any $k$-order differential operator
$\Delta$ on $\cD(Y)'$ is represented by a
$C^\infty(X)$-homomorphism
\be
{\got f}^\Delta: \cJ^k(\cD(Y)')\to \cD(Y)'.
\ee
If ${\got f}^\Delta(\cD(Y))\subset \cD(Y)$, then
$\Delta|_{\cD(Y)}$ is a differential operator on $\cD(Y)$ whose
transpose is $\Delta$ on $\cD(Y)'$. Consequently, a differential
operator $\Delta$ on $\cD(Y)'$ is not the transpose of that on
$\cD(Y)$ iff it does not send $\cD(Y)\subset \cD(Y)'$ onto itself.


\begin{thebibliography}{ddd}

\bibitem{cirr} R. Cirelli, A. Mani\'a, The group of gauge
transformations as a Schwartz -- Lie group, {\it J. Math. Phys.}
{\bf 26} (1985) 3036-3041.

\bibitem{dal} Yu. Dalecky, S. Fomin, {\it Measures and
Differential Equations in Infinite-Dimensional Space} (Kluwer,
1991).

\bibitem{book97} G. Giachetta, L. Mangiarotti, G. Sardanashvily,
{\it New Lagrangian and Hamiltonian Methods in Field Theory}
(World Scientific, 1997).


\bibitem{book09} G. Giachetta, L. Mangiarotti, G. Sardanashvily,
{\it Advanced Classical Field Theory} (World Scientific,
Singapore, 2009).

\bibitem{greub} W. Greub, S. Halperin and R. Vanstone, {\it Connections, Curvature
and Cohomology} (Academic Press, New York, 1972).

\bibitem{kosz60} J.Koszul, {\it Lectures on Fibre Bundles and Differential
Geometry} (Tata University, Bombay, 1960).

\bibitem{kras} I. Krasil'shchik, V. Lychagin and A. Vinogradov, {\it Geometry of
Jet Spaces and Nonlinear Partial Differential Equations} (Gordon
and Breach, Glasgow, 1985).

\bibitem{kunz} M. Kunzinger, R. Steinbauer, Foundations of
nonlinear distributional geometry, {\it Acta Appl. Math.} {\bf 71}
(2002) 179-206.

\bibitem{sard09a} G. Sardanashvily, Fibre bundles, jet manifolds and
Lagrangian theory. Lectures for theoreticians, {\it arXiv:}
0908.1886.

\bibitem{sard09} G. Sardanashvily, Lectures on differential
geometry of modules and rings, {\it arXiv:} 0910.1515.

\bibitem{trev} F. Treves, F. {\it Topological Vector Spaces,
Distributions and Kernels} (Academic Press, New York, 1967).



\end{thebibliography}
\end{document}